\DeclareMathOperator{\divv}{div}
\newcommand{\pair}[1]{\left\langle #1 \right\rangle}
\providecommand{\abs}[1]{\lvert#1\rvert}
\newcommand{\ud}{\mathrm{d}}
\newcommand{\pd}{\partial}
\newcommand{\RR}{{\mathbb R}}
\newcommand{\vol}{{\ud x}}
\newcommand{\Diff}{\mathrm{Diff}}
\newcommand{\Xcal}{\mathfrak{X}}
\newcommand{\SDiff}{\mathrm{SDiff}}
\newcommand{\Xcalvol}{{\Xcal_\vol}}
\newcommand{\LieD}{\mathcal{L}}
\DeclareMathOperator{\ad}{ad}
\DeclareMathOperator{\Ad}{Ad}
\newcommand*\id{\mathrm{id}}
\newcommand*\SO{\mathrm{SO}}
\providecommand{\so}{\mathfrak{so}}
\newcommand*\SE{\mathrm{SE}}
\newcommand*\se{\mathfrak{se}}
\newcommand{\lefttrans}{\mathrm{L}}
\newcommand{\righttrans}{\mathrm{R}}
\NewDocumentCommand\xnewtheorem{m o m}
 {%
  \IfNoValueTF{#2}
  {\newtheorem{#1}{#3}[section]}
   {%
    \newaliascnt{#1}{#2}%
    \newtheorem{#1}[#1]{#3}%
    \aliascntresetthe{#1}%
    \expandafter\newcommand\csname #1autorefname\endcsname{#3}%
   }%
 }
\theoremstyle{plain}
\theoremstyle{definition}
\newtheorem*{remark}{Remark}
 \author{Klas Modin}
 \address{Chalmers University of Technology and University of Gothenburg}
 \email{klas.modin@chalmers.se}
\title[Geometric Hydrodynamics]{Geometric Hydrodynamics: from Euler, to Poincaré, to Arnold}
 \subjclass[2010]{35Q31, 37K65, 70S05}
 \keywords{Euler equations, infinite-dimensional geometry, Lie groups, geometric mechanics}
\thanks{I would like to thank the GMC Network for support, and the organizers in Coimbra for a very friendly and excellent workshop.}
\begin{document}

\begin{abstract}

These are lecture notes for a short winter course at the Department of Mathematics, University of Coimbra, Portugal, December 6--8, 2018.
The course was part of the 13th International Young Researchers Workshop on Geometry, Mechanics and Control.

In three lectures I trace the work of three heroes of mathematics and mechanics: Euler, Poincaré, and Arnold.
This leads up to the aim of the lectures:
to explain Arnold's discovery from 1966 that solutions to Euler's equations for the motion of an incompressible fluid correspond to geodesics on the infinite-dimensional Riemannian manifold of volume preserving diffeomorphisms.
In many ways, this discovery is the foundation for the field of geometric hydrodynamics, which today encompasses much more than just Euler's equations, with deep connections to many other fields such as optimal transport, shape analysis, and information theory.


\end{abstract}

\maketitle



\iftoggle{coimbra2}{
}{

\chapter{The Legacy of Euler, Poincaré, and Arnold}	

\lettrine[lines=3,slope=0pt,findent=0pt,nindent=4pt]{H}{eroes}
are plentiful in mathematics. 
In this chapter we shall meet three of them, and review some of their work, leading up to the geodesic interpretation of incompressible fluid motion.
The material constitutes the background and foundation for \emph{geometric hydrodynamics}---the field of mathematics covered in these lecture notes.\todo{Comment on the relation to `topological hydrodynamics'}\
There are, of course, far more than three contributers to this field; our particular story is abridged to one hero per century.

}

\section{The Incompressible Fluid Equations}\label{sec:incompressible_euler}

\iftoggle{coimbra2}{}{
	\epigraph{Lisez Euler, lisez Euler, c'est notre maître à tous.}{\textit{Pierre-Simon Laplace} \\ on Euler's influence}
}

\noindent It is impossible to overestimate the influence that the Swiss mathematician Leonhard Euler (1707--1783) have had on essentially all of mathematics.
Our story concerns his work on the motion of an incompressible fluid. 

Euler thought of a fluid as a large number of particles moving in a fixed domain.
He argued that, contrary to a solid body, the fluid particles are ``not joined to each other by any bond''\footnote{\citet[Part I, paragraph~3]{Eu1761}, translated to English by Enlin Pan.}.
But he also realized that there are some restrictions on how the particles can move.
In Euler's own words:

\begin{quote}
	At the same time, it cannot be that the motion of all particles of the fluid is bound in \emph{no} way by any law; nor can \emph{any} conceivable motion of a single particle be allowed.
	For since the particles are impenetrable, it is clear that no motion can take place where some particles go through others, or that they penetrate each other. An infinite number of such motions should be excluded, and only the remaining are to be considered, and clearly the task is to determine by which property these remaining possibilities can be distiguished by each other.\footnote{\citet[Part I, paragraph~4]{Eu1761}, translated to English by Enlin Pan.}
\end{quote}

It is stunning how Euler so effortless pin-points the main mechanism behind the complexity of fluid motion---particles moving freely without penetrating each other.
This mechanism is underlying the amazing fluid patterns we see in nature: vortex formations, turbulence, shock-waves, etc.

Euler separates the task he set out in two steps:

\begin{enumerate}
	\item first characterize the set of `possible motions' from the impossible ones;
	\item then, among all possible motions, select the one determined by the principles of mechanics.
\end{enumerate}
In a more modern language the two steps are (1) to determine the constraint manifold, and (2) to formulate Newton's equations while accounting for the constraints through contraint forces.

\begin{remark}
Euler noticed that the set of `possible motions' is infinite-dimensional.
This observation is actually the key to geometric hydrodynamics, although Euler did not know it.
Indeed, in light of Poincaré's work on mechanics on Lie group (\autoref{sec:mechanics_on_lie_groups}), together with modern notions of infinite-dimensional groups, the fact that the set of possible motions can be thought of as an infinite-dimensional Lie group is what unlocked Arnold's great discovery, as we shall see in \autoref{sec:geodesic_interpretation}.\todo{Reformulate this paragraph better}
\end{remark}

By studying infinitesimally small fluid parcels in two ($d=2$) and three ($d=3$) dimensions, Euler derived the `possible motions' as a condition on the vector field $v=(v^1,\ldots,v^d)$ describing the velocity of particles passing through a fluid parcel at position $x=(x^1,\ldots,x^d)$.
By purely geometric considerations (in particular without using the divergence theorem) he arrived at
\begin{equation}
	\sum_{i=1}^d \frac{\partial v^i}{\partial x^i} = 0,
\end{equation}
or, in modern notation,
\begin{equation}
	\divv v = 0.
\end{equation}
The first step, to determine the constraints, was thereby achieved.

Next, addressing dynamics, Euler noticed that the vector field $v$ in general depends also on time (in addition to space).
Thus, the motion of the fluid particles is determined by a time-dependent vector field $v$ in such a way that at every instance in time, it is divergence free.
He presses on by deriving, from Newton's second law, the force $f= (f^1,\ldots,f^d)$ acting on particles in an infinitesimal fluid parcel
\begin{equation}\label{eq:euler_particle_force}
	f^i = \rho \Big( \frac{\partial v^i}{\partial t} + \sum_{j=1}^d v^j \frac{\partial v^i}{\partial x^j} \Big)
\end{equation}
where $\rho =$ const is the mass of the fluid parcel, i.e., the \emph{mass density}.

In the formula \eqref{eq:euler_particle_force} we see one of Euler's many strokes of genius:
the vector field $v$ does not describe the velocity of individual particles, but the (mean) velocity of particles passing through the point $x$ at time $t$.
If $X(t) = (X^1(t),\ldots,X^d(t))$ denotes the position of a specific fluid particle at time $t$, then the particle velocity is given by 
\begin{equation}\label{eq:reconstruction_euler}
	\dot X(t) = v(X(t),t).	
\end{equation}
Furthermore, from Newton's second law we know that the force acting on the particle is given by
\begin{equation}
	f^i = \rho\frac{\ud \dot X^i }{\ud t}  = \rho(\frac{\partial v^i}{\partial t}(X(t),t) + \sum_{j=1}^d\frac{\partial v^i}{\partial x^j} \dot X^j).
\end{equation}
Combining this with \eqref{eq:reconstruction_euler} yields Euler's expression \eqref{eq:euler_particle_force} for the force.
The derived relation between the particle acceleration $\ddot X(t)$ and the time derivative of the fluid vector field $\partial v/\partial t$ turns out to be extremely important in classical field theory; it offers a systematic way of moving between \emph{Eulerian coordinates}, given by the vector field $v$, and \emph{Lagrangian coordinates}, given by the particles' positions and velocities $(X,\dot X)$.

In absence of the external forces and contraints, the equations of motion would now be obtained by equating the force $f$ with the external force $f_e$. 
However, in general, the equations so obtained violates the impenetrability constraint---one needs to add a constraint force $f_c$, so that the vector field $v$ remains divergence free.
Euler realized that this constraint force must have a potential, i.e., it is of the form $f_c = (\partial p/\partial x^1,\ldots,\partial p/\partial x^d)$ for some differentiable function~$p$, which, of course, is the \emph{pressure} of the fluid.\todo{More details needed here.}\
The force balance
\begin{equation}
	f + f_c = f_e
\end{equation}
then yields the final set of equations today known as \emph{Euler's incompressible fluid equations}
\begin{equation}\label{eq:euler_eq_classical}
	\left\{ 
	\begin{aligned}
		&\frac{\partial v^i}{\partial t}  + \sum_{j=1}^d v^j \frac{\partial v^i}{\partial x^j} = -\frac{1}{\rho}\frac{\partial p}{\partial x^i}  + \frac{f_e^i}{\rho}\\
		&\sum_{i=1}^d \frac{\partial v^i}{\partial x^i} = 0 .
	\end{aligned}
	\right.
\end{equation}

What we have seen so far is a classical presentation, not far from the original work of Euler.
In Section~\ref{sec:geodesic_interpretation} below we shall give a completely different, geometric derivation of these equations.
Before that, however, let us give a more modern description in terms of Riemannian geometry.

\subsection*{Formulation on Riemannian Manifolds}

For a more transparent view of the geometric structures underlying Euler's equations \eqref{eq:euler_eq_classical} it is useful to think of the fluid domain as a Riemannian manifold $M$.
This allows us to formulate Euler's equations in a coordinate-free manner.
The remainder of these lecture notes thus requires basic knowledge of Riemannian geometry, for example the topics covered in the excellent book by Lee~\cite{Le1997}.

Recall that the Riemannian structure is given by a smooth field over $M$ which to each $x\in M$ associates an inner product $\pair{\cdot,\cdot}_{\!x}$ on the tangent space $T_xM$.
We shall often use vector analysis dot notation in calculations: $\pair{u,v} \equiv u\cdot v$ and $\abs{u}^2 = \pair{u, u}$.

First recall that the gradient of a scalar differentiable function $f$ is the vector field $\nabla f$ defined so that for all vector fields $v$ on $M$
\begin{equation}
	\pair{\nabla f,v} = \ud f (v).
\end{equation}
The geometric generalization of the gradient, from functions to vector fields, is the \emph{co-variant derivative}.
Indeed, the co-variant derivative of $u$ along $v$, denoted $\nabla_v u$, fulfills three basic properties:
\begin{enumerate}
	\item It is linear in $v$, i.e., $\nabla_{f v+g w} u = f \nabla_v u + g \nabla_w u$ for smooth functions $f$ and $g$;
	\item It is additive in $u$, i.e., $\nabla_v (u+w) = \nabla_v u + \nabla_v w$;
	\item It obeys the product rule in $u$, i.e., $\nabla_v (f u) = f\nabla_v u + \pair{v,\nabla f}u$.
\end{enumerate}
Its local coordinate expression for $v = \sum_i v^i \frac{\pd}{\pd x^i}$ and $u = \sum_i u^i \frac{\pd}{\pd x^i}$ is
\begin{equation}
	\nabla_v u = \sum_{jk} v^j \left( \frac{\pd u^k}{\pd x^j} + \sum_l \Gamma^k_{lj}u^l \right)\frac{\pd}{\pd x^k}
\end{equation}
where $\Gamma^k_{lj}$ are the Christoffel symbols associated with the Riemannian metric.
An important property of the co-variant derivative is
\begin{equation}\label{eq:covar_identity}
	\pair{\nabla_v u,u} = \frac{1}{2}\pair{v,\nabla\abs{u}^2}.
\end{equation}
\begin{exercise}
	Prove the identity \eqref{eq:covar_identity}.
\end{exercise}


Returning now to hydrodynamics, the invariant, geometric formulation of Euler's equations on the Riemannian manifold $M$ is
\begin{equation}\label{eq:euler_incompressible_invariant}
	\left\{
	\begin{aligned}
	&\frac{\partial v}{\partial t} + \nabla_v v = - \nabla p , \\
	&\divv v=0
	\end{aligned}
	\right.
\end{equation}
where, of course, $v$ is the vector field on $M$ describing the fluid in Eulerian coordinates, and $p$ is the pressure function.
(For simplicity, we set the mass density to $\rho \equiv 1$).
We shall discuss the geometrical \emph{origin} of \eqref{eq:euler_incompressible_invariant} in \autoref{sec:geodesic_interpretation} below.
Here, we continue with one of its important conservation laws: conservation of energy.

The energy functional is the sum of the kinetic energies for all the fluid particles and is thus given by
\begin{equation}
	E(v) = \frac{1}{2}\int_M \abs{v}^2 \vol ,
\end{equation}
where $\vol$ denotes the standard volume element associated with the Riemannian structure on $M$.
To prove that $E$ is conserved we shall need a result that goes back to Helmholtz~\cite{He1858} work on fluid dynamics in the 1850s.

\begin{lemma}[Helmholtz decomposition]\label{lem:helmholtz}
	Let $M$ be a compact manifold, possibly with boundary, and let $u$ be a $C^1$ vector field on $M$.
	Then there exist a $C^1$ vector field $v$ and a $C^2$ function $f$ such that
	\begin{equation}
		u = v + \nabla f \quad\text{with}\quad \divv v = 0 \quad\text{and}\quad v|_{\partial M} = 0.
	\end{equation}
	Furthermore, the components $v$ and $\nabla f$ are orthogonal in the $L^2$ sense
	\begin{equation}
		\pair{\nabla f,v}_{L^2} \coloneqq \int_M \nabla f\cdot v \, \vol = 0.
	\end{equation}

\end{lemma}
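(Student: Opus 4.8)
The plan is to produce the scalar potential $f$ as the solution of an elliptic boundary value problem on $M$ and then to define $v$ as the remainder $u - \nabla f$. Imposing $\divv v = 0$ forces $f$ to satisfy the Poisson equation $\Delta f = \divv u$ in the interior of $M$, while the requirement $v|_{\partial M} = 0$ is exactly the condition that $\nabla f$ agree with $u$ along the boundary. First I would solve the Neumann problem $\Delta f = \divv u$ in $M$ with $\pd f/\pd n = u\cdot n$ on $\partial M$; its solvability condition $\int_M \divv u \,\vol = \int_{\partial M} u\cdot n \,\ud s$ holds automatically by the divergence theorem, and standard elliptic theory on the compact manifold $M$ yields a solution $f$, unique up to an additive constant, with $f\in C^2$ consistent with the stated regularity. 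Setting $v \coloneqq u - \nabla f$ then gives a $C^1$ field with $\divv v = 0$ throughout $M$.

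I expect the main obstacle to be enforcing the \emph{full} boundary condition $v|_{\partial M} = 0$, and not merely the vanishing of the normal flux. The Neumann data is precisely what annihilates the normal component, since $\pd f/\pd n = u\cdot n$ gives $v\cdot n = (u-\nabla f)\cdot n = 0$ on $\partial M$. Reaching the entire trace is the delicate point, as it requires killing the tangential part of $v$ along $\partial M$ as well. Here the plan is to work with the $1$-form $u^\flat$ metrically dual to $u$ and to appeal to the Hodge--Morrey--Friedrichs decomposition, choosing the potential so that the divergence-free (coexact) summand carries vanishing boundary trace while the exact summand furnishes $\nabla f$. Controlling the normal and tangential boundary components simultaneously through this elliptic system is where I expect the real work to lie; by contrast the interior equation $\Delta f = \divv u$ and the $C^2$ regularity are comparatively routine.

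The remaining claim, the $L^2$-orthogonality, then follows at once from the two properties already secured. Integrating by parts,
\begin{equation}
	\pair{\nabla f, v}_{L^2} = \int_M \nabla f\cdot v \,\vol = \int_{\partial M} f\,(v\cdot n) \,\ud s - \int_M f\,\divv v \,\vol ,
\end{equation}
and both terms on the right vanish: the interior integral because $\divv v = 0$ in $M$, and the boundary integral because $v|_{\partial M} = 0$ forces $v\cdot n = 0$ there. Hence $\pair{\nabla f, v}_{L^2} = 0$, which completes the decomposition.
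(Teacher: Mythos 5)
Your core construction is exactly the paper's: solve the Neumann problem $\Delta f = \divv u$ with $\nabla f\cdot\mathbf{n} = u\cdot\mathbf{n}$ on $\partial M$ (solvable precisely because of the divergence-theorem compatibility condition you cite), set $v \coloneqq u - \nabla f$, and get the $L^2$-orthogonality by integration by parts using $\divv v = 0$ and the vanishing of the \emph{normal} component of $v$ on the boundary. Up to the sign convention for the Laplacian, that is the whole of the paper's proof, and this part of your argument is correct.

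The step you single out as ``the real work''---upgrading to the full trace $v|_{\partial M}=0$ via the Hodge--Morrey--Friedrichs decomposition---is not merely delicate; it is impossible, because the statement with a literal vanishing trace is false. Take $M$ the closed unit disk in $\RR^2$ and $u(x,y)=(-y,x)$. If $u = v + \nabla f$ with $\divv v = 0$ and $v|_{\partial M}=0$, then $\Delta f = \divv u = 0$ in $M$ and $\nabla f\cdot\mathbf{n} = u\cdot\mathbf{n} = 0$ on $\partial M$ (as $u$ is tangent to the circle), so Green's identity gives $\int_M \abs{\nabla f}^2\,\vol = 0$, hence $v = u$, which does not vanish on $\partial M$---a contradiction. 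The obstruction is structural: demanding $v|_{\partial M}=0$ prescribes both the normal derivative of $f$ and the tangential gradient of $f$ along $\partial M$, i.e., full Cauchy data for an elliptic equation, which is overdetermined. Equivalently, the $L^2$-orthogonal complement of gradients consists of divergence-free fields \emph{tangent} to the boundary, not fields with vanishing trace. The lemma's ``$v|_{\partial M}=0$'' is an abuse of notation for $v\cdot\mathbf{n}|_{\partial M}=0$: this is all the paper's own proof establishes (``by construction \ldots\ $\mathbf{n}\cdot v|_{\partial M}=0$'') and all that is used later, both for energy conservation and in the proof of Arnold's theorem. Since your orthogonality computation already uses only $v\cdot\mathbf{n}=0$, dropping the doomed Hodge--Morrey--Friedrichs detour leaves you with precisely the paper's argument.
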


\begin{proof}
	For all technical details of the proof (especially elliptic PDE theory), we refer to \cite{Ta1996a}.
	A brief sketch of the proof goes as follows.
	Given $u$ which is $C^1$, consider the Poisson equation
	\begin{equation}
		\Delta f = -\divv u
	\end{equation}
	with the inhomogeneous Neumann boundary conditions
	\begin{equation}
		\nabla f \cdot \textbf{n}\Big|_{\partial M} = u\Big|_{\partial M}.
	\end{equation}
	From elliptic PDE theory we know that there exists a $C^2$ solution. 
	Now set $$v\coloneqq u -\nabla f.$$
	By construction we have $\divv v = 0$ and $\textbf{n}\cdot v|_{\partial M} = 0$.
	Orthogonality between the terms follows from Stoke's theorem, since
	\begin{equation}
		\int_M \nabla f \cdot v \,\vol = - \int_M f \underbrace{\divv v}_{=0} \,\vol + \int_{\partial M} f \underbrace{\textbf{n}\cdot v}_{=0}\,\ud S.
	\end{equation}
\end{proof}

To prove conservation of energy,
let $v= v(x,t)$ and $p=p(x,t)$ be a solution to \eqref{eq:euler_incompressible_invariant}.
Then
\begin{align}
	\frac{\ud }{\ud t}E(v) &= -\int_M v\cdot (\nabla_v v+\nabla p) \,\vol \\
	&= -\int_M v\cdot \nabla_v v \,\vol - \int_M v\cdot \nabla p \, \vol .
\end{align}
That the second term vanishes follows directly from Helmholtz's \autoref{lem:helmholtz}.
Using the property \eqref{eq:covar_identity} of the co-variant derivative we then have
\begin{align}
	\frac{\ud }{\ud t}E(v) &= -\int_M v\cdot \nabla_v v \,\vol = -\int_M v\cdot \frac{1}{2}\nabla \abs{v}^2 \,\vol = 0
\end{align}
where the last identity also follows from \autoref{lem:helmholtz} since $v$ is divergence free.

\section{Mechanics on Lie groups}\label{sec:mechanics_on_lie_groups}
In a two-page paper from 1901 Henri Poincaré~\cite{Po1901} -- our second hero of the \iftoggle{coimbra2}{lecture notes}{chapter} -- derived the differential equations for mechanical systems evolving on general (finite dimensional) Lie groups.
He arrived at this through the dynamics of rotating rigid bodies in liquids, but he gave no clear motivation for why he studied these equations in such great generality (for any \emph{continuous transformation group}, what we today call a Lie group).
He did not return to this work after it was published.
Nevertheless, he understood the importance of these equations.
As we shall see in \autoref{sec:geodesic_interpretation} below, the structure studied in Poincaré's short paper constitutes the spine of geometric hydrodynamics, although it took over 60 years, and another brilliant mind, to realize this.

Let us begin with a brief review of Lie groups.
They are groups that are also manifolds, where the group multiplication and inversion are smooth operations.
The unit element in a Lie group $G$ is denoted $e$.
For a fixed $h\in G$, the corresponding left and right translation operators are given by
\begin{equation}
	\lefttrans_h\colon G\to G, \; g\mapsto hg \qquad\text{and}\qquad \righttrans_h\colon G\to G, \; g\mapsto gh .
\end{equation}
The \emph{lifted left and right actions} of $G$ on its tangent bundle $TG$ are given by
\begin{equation}
	h\cdot (g,\dot g) = T\lefttrans_h(g,\dot g) \qquad\text{and}\qquad (g,\dot g)\cdot h = T\righttrans_h(g,\dot g).
\end{equation}
A Lie group $G$ also acts on its Lie algebra $\mathfrak g = T_e G$ by the adjoint operator
\begin{equation}
	\Ad_g\colon\mathfrak{g}\to\mathfrak{g},\quad \xi \mapsto T_e (\lefttrans_g \circ \righttrans_{g^{-1}})\xi.
\end{equation}
Notice that $\Ad_g$ is the derivative at $e$ of the \emph{inner automorphism} $h\mapsto ghg^{-1}$.
If $\mathrm I(g,h) = ghg^{-1}$ then the Lie bracket $[\cdot,\cdot]\colon \mathfrak g\times\mathfrak g\to \mathfrak g$ is the derivative of $I$ at $(e,e)$
\begin{equation}
	[\eta,\xi] = T_{(e,e)}\mathrm I(\eta,\xi).
\end{equation}
If $g(t)$ and $h(s)$ are paths in $G$ such that $g'(0) =\eta$ and $h'(0) = \xi$, then
\begin{align}
	[\eta,\xi] &= \frac{\ud}{\ud t} \frac{\ud}{\ud s}\Bigg|_{t=s=0} g(t)h(s)g(t)^{-1} \\
	&= \frac{\ud}{\ud t}\Bigg|_{t=0} \Ad_{g(t)}\xi \\
	&= \frac{\ud}{\ud t}\Bigg|_{t=0} \left( T_e \lefttrans_{g(t)}\xi - T_e\righttrans_{g(t)^{-1}}\xi \right).
\end{align}
These formul\ae\ are useful for computing the Lie bracket, especially for infinite dimensional Lie groups as we shall see in \autoref{sec:geodesic_interpretation} below.
If $G$ is a matrix Lie group, then we have the following formul\ae\
\begin{equation}
	h\cdot(g,\dot g) = (hg,h\dot g),\qquad \Ad_g(\xi) = g\xi g^{-1},\qquad [\xi,\eta] = \xi\eta-\eta\xi .
\end{equation}

Back now to Poincaré.
He started from a finite dimensional Lie group $G$ and considered a Lagrangian on $TG$ fulfilling, for all $(g,\dot g)\in TG$ and all $h\in G$, the left invariance property
\begin{equation}\label{eq:Linvariance}
	L\big(h\cdot(g,\dot g)\big) = L(g,\dot g).
\end{equation}

\begin{lemma}
	Let $\mathfrak g$ denote the Lie algebra of $G$.
	Then any Lagrangian fulfilling \eqref{eq:Linvariance} is of the form
	\begin{equation}
		L(g,\dot g) = \ell(g^{-1}\cdot\dot g)
	\end{equation}
	for the function $\ell\colon \mathfrak g\to \RR$ defined by $\ell(\xi) = L(e,\xi)$.
\end{lemma}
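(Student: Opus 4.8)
The plan is to evaluate the invariance identity \eqref{eq:Linvariance} at the single well-chosen group element $h = g^{-1}$, which collapses the base point to the identity and leaves only a Lie-algebra vector in the fibre. This is the standard ``transport everything back to $e$'' trick for left-invariant objects.

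First I would unpack the lifted left action at this value. Writing $h\cdot(g,\dot g) = T\lefttrans_h(g,\dot g)$ and setting $h = g^{-1}$, the base point transforms as $\lefttrans_{g^{-1}}(g) = g^{-1}g = e$, so it lands at the identity. Consequently the velocity component $T_g\lefttrans_{g^{-1}}\dot g$ is a tangent vector based at $e$, i.e.\ an element of $\mathfrak g = T_e G$; this is precisely the object denoted $g^{-1}\cdot\dot g$. Hence $g^{-1}\cdot(g,\dot g) = (e,\, g^{-1}\cdot\dot g)$.

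Next I would substitute $h = g^{-1}$ directly into \eqref{eq:Linvariance}. The right-hand side is $L(g,\dot g)$, while the left-hand side becomes $L(e,\, g^{-1}\cdot\dot g)$. By the definition $\ell(\xi) = L(e,\xi)$, the left-hand side is exactly $\ell(g^{-1}\cdot\dot g)$, and therefore $L(g,\dot g) = \ell(g^{-1}\cdot\dot g)$, as claimed. Since $\ell$ is merely the restriction of $L$ to the fibre $T_e G$ over the identity, it is automatically a well-defined (smooth) map $\mathfrak g\to\RR$, requiring no separate consistency check.

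There is essentially no hard step here: the whole content is the observation that left invariance lets us carry every tangent vector back to the identity via $g^{-1}$, and the only point to verify carefully is that the lifted action at $h = g^{-1}$ genuinely sends the base point to $e$, so that $g^{-1}\cdot\dot g$ really lies in $\mathfrak g$. If one also wanted the converse---that every $\ell$ produces a left-invariant $L$---it would follow at once from the cocycle property $T\lefttrans_{h_1}\circ T\lefttrans_{h_2} = T\lefttrans_{h_1 h_2}$ of the lifted action, but the statement as posed only requires the forward direction established above.
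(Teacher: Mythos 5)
Your proof is correct and is exactly the argument in the paper: both specialize the left-invariance identity \eqref{eq:Linvariance} to $h = g^{-1}$, observe that $g^{-1}\cdot(g,\dot g) = (e,\,g^{-1}\cdot\dot g)$ with $g^{-1}\cdot\dot g \in \mathfrak g$, and conclude via the definition $\ell(\xi) = L(e,\xi)$. Your additional remarks on well-definedness and the converse direction are fine but not needed.
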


\begin{proof}
	From the invariance, taking $h=g^{-1}$, we have
	\begin{equation}
		L(g,\dot g) = L(g^{-1}\cdot (g,\dot g)) = L(e,g^{-1}\cdot \dot g) = \ell(g^{-1}\cdot \dot g).
	\end{equation}
\end{proof}

As for any Lagrangian system, the Euler--Lagrange equations for $L$ yields a flow on $TG$.
However, since $G$ is a Lie group, the tangent bundle is trivializable as $TG\simeq G\times \mathfrak g$ by the mapping
\begin{equation}\label{eq:left_trivialization}
	(g,\dot g)\mapsto (g,\underbrace{T_{g}\lefttrans_{g^{-1}}\dot g}_{\xi}).
\end{equation}
What Poincaré did was to derive the Euler--Lagrange equations expressed in the trivialized coordinates $(g,\xi)$.
Due to the invariance~\eqref{eq:Linvariance} of the Lagrangian, the resulting equation for $\xi$ is independent of $g$.

\begin{theorem}[Poincaré~\cite{Po1901}]\label{thm:euler_poincare_eq}
	Let $g\colon [0,1]\to G$ be a solution to the Euler--Lagrange equations for $L$, thus extremizing the action functional
	\begin{equation}
		A(g) = \int_0^1 L(g(t),\dot g(t))\ud t .
	\end{equation}
	Then the path $\xi\colon [0,1]\to \mathfrak g$ defined by \eqref{eq:left_trivialization} fulfills the \emph{Euler--Poincaré equation}
	\begin{equation}\label{eq:euler_poincare}
		\frac{\ud}{\ud t}D \ell(\xi) - \ad_\xi^* D\ell(\xi) = 0,
	\end{equation}
	where $\ad^*_\xi$ is the co-adjoint infinitesimal action of $\xi$ on $\mathfrak g^*$, defined by
	\begin{equation}\label{eq:adstar_def}
		\pair{\ad^*_\xi \mu, \zeta} = \pair{\mu, [\xi,\zeta]},\qquad \forall\,\zeta\in \mathfrak g.
	\end{equation}
\end{theorem}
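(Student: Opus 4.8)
The plan is to apply Hamilton's variational principle directly in the trivialized coordinates. Since $g$ extremizes $A(g) = \int_0^1 L(g,\dot g)\,\ud t = \int_0^1 \ell(\xi)\,\ud t$ over all variations with fixed endpoints, I would take an arbitrary smooth two-parameter family $g(t,\epsilon)$ with $g(t,0) = g(t)$ and with $g(0,\epsilon)$, $g(1,\epsilon)$ independent of $\epsilon$, and impose $\frac{\ud}{\ud\epsilon}\big|_{\epsilon=0}A = 0$. Writing $\xi = T_g\lefttrans_{g^{-1}}\dot g$ as in \eqref{eq:left_trivialization} and introducing the variation vector $\eta = T_g\lefttrans_{g^{-1}}\delta g$ in the Lie algebra, where $\delta g = \partial_\epsilon g|_{\epsilon=0}$, the whole argument reduces to understanding how $\xi$ varies in terms of $\eta$.

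The crucial step---and the only genuinely non-trivial one---is the constrained variation formula
\begin{equation}
	\delta\xi = \dot\eta + [\xi,\eta] = \dot\eta + \ad_\xi\eta.
\end{equation}
For a matrix Lie group this follows in two lines: writing $\xi = g^{-1}\partial_t g$ and $\eta = g^{-1}\partial_\epsilon g$ and using the equality of mixed partials $\partial_\epsilon\partial_t g = \partial_t\partial_\epsilon g$, the terms involving second derivatives cancel and one is left with $\partial_\epsilon\xi - \partial_t\eta = \xi\eta - \eta\xi = [\xi,\eta]$. For a general Lie group the same identity holds, and I would derive it from the bracket formul\ae\ recorded just before the theorem, again exploiting the symmetry of the second derivative of $(t,\epsilon)\mapsto g(t,\epsilon)$.

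With this formula in hand the remainder is routine. I would write
\begin{equation}
	0 = \delta A = \int_0^1 \pair{D\ell(\xi),\delta\xi}\,\ud t = \int_0^1 \pair{D\ell(\xi),\dot\eta + \ad_\xi\eta}\,\ud t,
\end{equation}
integrate the first term by parts---the boundary contribution vanishing because $\eta(0)=\eta(1)=0$, as $\delta g$ vanishes at the endpoints---and rewrite the second term using the very definition \eqref{eq:adstar_def} of the co-adjoint action, $\pair{D\ell(\xi),\ad_\xi\eta} = \pair{\ad_\xi^* D\ell(\xi),\eta}$. These collect into
\begin{equation}
	0 = \int_0^1 \pair{-\tfrac{\ud}{\ud t}D\ell(\xi) + \ad_\xi^* D\ell(\xi),\,\eta}\,\ud t.
\end{equation}

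Finally, since $\delta g$ is an arbitrary variation vanishing at the endpoints, the associated $\eta = T_g\lefttrans_{g^{-1}}\delta g$ ranges over \emph{all} curves in $\mathfrak g$ vanishing at $t=0,1$; the fundamental lemma of the calculus of variations then forces the first factor of the integrand to vanish identically, which is exactly \eqref{eq:euler_poincare}. I expect the main obstacle to lie in the careful justification of the variation formula in the general, non-matrix setting, together with verifying that $\eta$ is genuinely unconstrained---this last feature being precisely what legitimizes passing from the variational problem on $G$ to the reduced one on $\mathfrak g$.
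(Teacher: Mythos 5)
Your proposal is correct and follows essentially the same route as the paper: the same constrained variation formula $\delta\xi = \dot\eta + [\xi,\eta]$ (proved via equality of mixed partials in the matrix case, exactly as in the paper's \autoref{lem:variation_poincare}), followed by integration by parts, the definition \eqref{eq:adstar_def} of $\ad^*_\xi$, and the fundamental lemma of the calculus of variations. You even flag the same two points the paper treats as the delicate ones---extending the variation formula beyond matrix groups and the surjectivity of $V \mapsto g^{-1}\cdot V$ ensuring $\eta$ is unconstrained---so nothing is missing.
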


Before we prove the theorem, let us just point out that the path $g(t)$ can be reconstructed from $\xi(t)$ and the initial point $g(0) = g_0$ by the \emph{reconstruction equation}
\begin{equation}\label{eq:reconstruction}
	\dot g(t) = T_e \lefttrans_{g(t)} \xi(t).
\end{equation}
Thus, the Euler--Poincaré equation~\eqref{eq:euler_poincare} together with the reconstruction equation~\eqref{eq:reconstruction} correspond to the full Euler--Lagrange equations for $L$.

For the proof of \autoref{thm:euler_poincare_eq} we shall need the following result.

\begin{lemma}\label{lem:variation_poincare}
	Let $g\in C^2([0,1]\times (-\delta,\delta),G)$ for some $\delta > 0$. 
	If
	\begin{equation}
		\xi(t,\epsilon) \coloneqq T_{g(t,\epsilon)}\lefttrans_{g(t,\epsilon)^{-1}}\frac{\partial g(t,\epsilon)}{\partial t}
	\end{equation}
	and
	\begin{equation}
		\eta(t,\epsilon) \coloneqq T_{g(t,\epsilon)}\lefttrans_{g(t,\epsilon)^{-1}}\frac{\partial g(t,\epsilon)}{\partial \epsilon}
	\end{equation}
	then
	\begin{equation}
		\frac{\partial \xi(t,\epsilon)}{\partial \epsilon} - \frac{\partial \eta(t,\epsilon)}{\partial t} = [\xi(t,\epsilon),\eta(t,\epsilon)].
	\end{equation}
\end{lemma}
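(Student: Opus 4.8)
The plan is to recognize the claimed identity as the statement that the left-logarithmic derivative of $g$ has vanishing ``torsion'', i.e.\ that its two coordinate components fail to commute exactly by their Lie bracket. The cleanest route is to prove it first for a matrix Lie group, where the trivialization $T_g\lefttrans_{g^{-1}}$ is literally multiplication by $g^{-1}$ and the computation is transparent, and then to argue that the same identity survives for an arbitrary (possibly infinite-dimensional) group by reading the one nonlinear step through the second-derivative definition of the bracket recorded earlier in this section.

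\textbf{Matrix case.} Suppose $G\subset\GL(n)$, so that $\xi = g^{-1}\frac{\partial g}{\partial t}$ and $\eta = g^{-1}\frac{\partial g}{\partial \epsilon}$, where I suppress the arguments $(t,\epsilon)$. Differentiating $\xi$ in $\epsilon$ and using $\frac{\partial}{\partial\epsilon}(g^{-1}) = -g^{-1}\big(\frac{\partial g}{\partial\epsilon}\big)g^{-1}$, I get
\[
\frac{\partial \xi}{\partial \epsilon} = -g^{-1}\frac{\partial g}{\partial\epsilon}\,g^{-1}\frac{\partial g}{\partial t} + g^{-1}\frac{\partial^2 g}{\partial\epsilon\,\partial t} = -\eta\,\xi + g^{-1}\frac{\partial^2 g}{\partial\epsilon\,\partial t},
\]
and symmetrically $\frac{\partial \eta}{\partial t} = -\xi\,\eta + g^{-1}\frac{\partial^2 g}{\partial t\,\partial\epsilon}$. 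Since $g$ is $C^2$, the mixed partials coincide, so subtracting eliminates the second-order term and leaves $\frac{\partial\xi}{\partial\epsilon} - \frac{\partial\eta}{\partial t} = \xi\eta - \eta\xi = [\xi,\eta]$, which is exactly the matrix bracket formula stated above. Here the whole content of the lemma is that the \emph{only} surviving obstruction to commuting the two derivatives is precisely the bracket.

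\textbf{General case and the main obstacle.} For a general Lie group the left-logarithmic derivative $T_g\lefttrans_{g^{-1}}\dot g$ is left-invariant, so I may first replace $g$ by $g_0^{-1}g$ for a fixed base point $g_0=g(t_0,\epsilon_0)$ and thereby assume $g(t_0,\epsilon_0)=e$; this does not change $\xi$ or $\eta$. The genuine difficulty is that $\lefttrans_{g^{-1}}$ is no longer a linear map, so the ``product rule'' step above must be reinterpreted: differentiating $T_g\lefttrans_{g^{-1}}$ in the transverse parameter produces a correction term which I expect to identify, via the characterization $[\eta,\xi]=\frac{\ud}{\ud t}\big|_{0}\Ad_{g(t)}\xi$ from this section, with $-[\eta,\xi]$, while the symmetric second-order part cancels by equality of mixed partials exactly as before. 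This is the step I anticipate will require the most care, and it is where one either invokes the Maurer--Cartan structure equation (the identity is its pullback along $g$ with the commuting coordinate fields $\partial_t,\partial_\epsilon$) or, in finite dimensions, passes to a faithful linear model of the Lie algebra to reduce verbatim to the matrix computation. At the level of these notes the infinite-dimensional diffeomorphism groups of \autoref{sec:geodesic_interpretation} are then handled by the same formal computation, granting the smoothness of the trivialization.
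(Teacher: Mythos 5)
Your matrix-group computation is exactly the paper's proof: the paper likewise says ``for simplicity, let us assume that $G$ is a matrix Lie group,'' differentiates $\xi = g^{-1}\partial_t g$ and $\eta = g^{-1}\partial_\epsilon g$, cancels the mixed second-order terms, and uses $\partial(g^{-1}) = -g^{-1}(\partial g)g^{-1}$ to land on $\xi\eta-\eta\xi$. Your closing sketch of the general case (Maurer--Cartan structure equation, or reduction to a linear model) goes beyond what the paper offers---the paper simply stops at the matrix case---and while it is only an outline, it correctly identifies the right tools, so the proposal matches the paper's argument and is sound.
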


\begin{proof}
	For simplicity, let us assume that $G$ is a matrix Lie group.
	Then
	\begin{equation}
		\xi = g^{-1}\frac{\pd g}{\pd t}\qquad\text{and}\qquad \eta = g^{-1}\frac{\pd g}{\pd \epsilon}
	\end{equation}
	Thus,
	\begin{equation}
		\frac{\pd\xi}{\pd \epsilon} = \frac{\pd g^{-1}}{\pd \epsilon}\frac{\pd g}{\pd t} + g^{-1}\frac{\pd^2 g}{\pd t\pd\epsilon}
	\end{equation}
	and
	\begin{equation}
		\frac{\pd\eta}{\pd t} = \frac{\pd g^{-1}}{\pd t}\frac{\pd g}{\pd \epsilon} + g^{-1}\frac{\pd^2 g}{\pd t\pd\epsilon}
	\end{equation}
	In the difference between the two, the second order derivatives cancel, and we get
	\begin{equation}
		\frac{\pd\xi}{\pd \epsilon} - \frac{\pd\eta}{\pd t} = \frac{\pd g^{-1}}{\pd \epsilon}\frac{\pd g}{\pd t} - \frac{\pd g^{-1}}{\pd t}\frac{\pd g}{\pd \epsilon}.
	\end{equation}
	For the derivative of $g^{-1}$ with respect to $t$ (and similarly for $\epsilon$), we have
	\begin{equation}
		0 = \frac{\pd}{\pd t} g^{-1}g = \frac{\pd g^{-1}}{\pd t}g + g^{-1}\frac{\pd g}{\pd t}
		\iff \frac{\pd g^{-1}}{\pd t} = - g^{-1}\frac{\pd g}{\pd t} g^{-1}.
	\end{equation}
	Thus
	\begin{equation}
		\frac{\pd\xi}{\pd \epsilon} - \frac{\pd\eta}{\pd t} = - \underbrace{g^{-1}\frac{\pd g}{\pd \epsilon}}_{\eta} \underbrace{g^{-1}\frac{\pd g}{\pd t}}_{\xi} + \underbrace{g^{-1}\frac{\pd g}{\pd t}}_{\xi} \underbrace{g^{-1}\frac{\pd g}{\pd \epsilon}}_{\eta} = [\xi,\eta].
	\end{equation}
	Thereby the stated result is obtained.
\end{proof}

We are now in position to prove \autoref{thm:euler_poincare_eq}.

\begin{proof}[Proof of \autoref{thm:euler_poincare_eq}]
	Let $g_\epsilon(t)$ be a variation of $g(t)$, i.e., $g_0(t) = g(t)$.
	Since $g$ extremizes $A$ we have
	\begin{equation}
		0 = \frac{\ud}{\ud \epsilon}\Big|_{\epsilon=0} A(g_\epsilon) = \frac{\ud}{\ud \epsilon}\Big|_{\epsilon=0}\int_0^1 \ell(\underbrace{g_\epsilon^{-1}\cdot \dot g_\epsilon(t)}_{\xi_\epsilon(t)})\ud t
		= \int_0^1 \pair{D\ell(\xi(t)),\frac{\ud}{\ud \epsilon}\Big|_{\epsilon=0}\xi_\epsilon(t)} \,\ud t.
	\end{equation}
	From \autoref{lem:variation_poincare} it then follows that
	\begin{equation}
		0 = \int_0^1 \pair{D\ell\big(\xi(t)\big), \dot\eta(t) + [\xi(t), \eta(t)]  } \,\ud t,
	\end{equation}
	where
	\begin{equation}
		\eta(t) \coloneqq g(t)^{-1}\cdot  \frac{\ud}{\ud \epsilon}\Big|_{\epsilon=0} g_\epsilon(t)
	\end{equation}
	Since the variation $g_\epsilon$ is fixed at the end-points, it follows that $\eta(0) = \eta(1) = 0$.
	Thus, integration by parts with respect to $t$ gives
	\begin{equation}
		0 = \int_0^1 \pair{-\frac{\ud}{\ud t}D\ell\big(\xi(t)\big) + \ad^*_{\xi(t)}D\ell\big(\xi(t)\big),\eta(t) } \ud t
	\end{equation}
	Since the mapping $T_g G \ni V \mapsto g^{-1}\cdot V \in \mathfrak{g}$ is surjective, the statement in the theorem now follows from the fundamental lemma of calculus of variations.
\end{proof}

\begin{example}[Free rigid body]
	The rigid body equations, also derived by Euler, describes the motion of a rigid body floating in space without influence of any external forces.
	These equations constitute an Euler--Poincaré system as follows.
	Let the `reference configuration' of the rigid body be described by a compact domain $\Omega\subset \RR^3$ such that its center of mass (for simplicity we assume that the mass density is unitary) is located at the origin.
	All possible configurations are given by
	\begin{equation}
		\Omega_{(R,\mathrm r)} \coloneqq R\Omega + \mathrm r \subset \RR^3
	\end{equation}
	where $R$ is a rotation matrix (thus belonging to the Lie group $\SO(3)$) and $\mathrm r$ is a translation vector.
	Thus, the motion is described by a path $t\mapsto (R(t),\mathrm r(t))$.

	Notice that
	\begin{equation}
		\Omega = \Omega_{I,\mathrm 0}
	\end{equation}
	and if $B\in\SO(3)$ and $b\in \RR^3$ then
	\begin{equation}
		B\Omega_{(R,\mathrm r)} + \mathrm b  = \Omega_{(BR,B\mathrm r + \mathrm b)} .
	\end{equation}
	This means that the configuration space has a group structure. 
	It is the special Euclidean group $\SE(3) = \SO(3)\ltimes \RR^3$: the \emph{semi-direct product} constructed from the left action of $\SO(3)$ on $\RR^3$.
	
	The velocity of an infinitesimal mass element initially located at $X\in\Omega$ is
	\begin{equation}
		\dot X(t) = \dot R(t) X + \dot{\mathrm r}(t).
	\end{equation}
	This is the Lagrangian velocity, corresponding directly to the left hand side of \eqref{eq:reconstruction_euler} in the reconstruction equation for the Euler fluid equations.\footnote{One can view a rigid body as a `fluid', only with a stronger constraint on the fluid particles: instead of preserving volume it should now preserve the metric tensor.}\
	The Lagrangian on $T\SE(3)$ is given by the kinetic energy.
	Assuming constant mass density, it is
	\begin{align}\label{eq:rb_lagrangian}
		L(R,\mathrm r, \dot R, \dot{\mathrm r}) &= \frac{1}{2}\int_\Omega \abs{\dot X}^2\, \ud X = 
		\frac{1}{2}\int_\Omega \abs{\dot R X}^2\ud X +
		\frac{1}{2}\int_\Omega \abs{\dot{\mathrm r}}^2\ud X + 
		\int_\Omega (\dot R X)\cdot \dot{\mathrm r}\ud X \\
		&= \frac{1}{2}\int_\Omega \abs{\dot R X}^2\ud X +
		\frac{1}{2} \abs{\dot{\mathrm r}}^2 \underbrace{\int_\Omega \ud X}_{m} + 
		\Big(\dot R \underbrace{\int_\Omega X\ud X}_{0} \Big)\cdot \dot{\mathrm r} \\
		&= \frac{1}{2}\int_\Omega \abs{\dot R X}^2\ud X +
		\frac{1}{2} m\abs{\dot{\mathrm r}}^2 .
	\end{align}
	The lifted left action of $(B,\mathrm b)\in \SE(3)$ on $(R,\mathrm r, \dot R, \dot{\mathrm r}) \in T\SE(3)$ is
	\begin{equation}
		(B,\mathrm b)\cdot(R,\mathrm r, \dot R, \dot{\mathrm r}) = (BR,B\mathrm r + \mathrm b, B\dot R, B\dot{\mathrm r}).
	\end{equation}
	The Lagrangian \eqref{eq:rb_lagrangian} is invariant with respect to this action:
	\begin{equation}
		L\big((B,\mathrm b)\cdot(R,\mathrm r, \dot R, \dot{\mathrm r}) \big) = 
		\frac{1}{2}\int_\Omega \abs{B\dot R X}^2\ud X +
		\frac{1}{2} m\abs{B\dot{\mathrm r}}^2 = L(R,\mathrm r, \dot R, \dot{\mathrm r})
	\end{equation}
	where the last equality follows since $B$ is a rotation matrix, thus preserving lengths.
	In summary, we see that the free rigid body is a left invariant Lagrangian system on $T\SE(3)$, so we are in position to apply \autoref{thm:euler_poincare_eq}.

	First, the Lie algebra of $\SE(3)$ is given by $\se(3) = \so(3)\ltimes \RR^3$ with Lie bracket
	\begin{equation}
		[(\hat\omega,\mathrm{v}),(\hat\sigma,\mathrm{w})] = (\hat\omega\hat\sigma-\hat\sigma\hat\omega,\omega\mathrm{w}-\sigma\mathrm{v}),
	\end{equation}
	where $\hat\omega_i$ are skew symmetric matrices.
	One usually identifies $\so(3)$ with $\RR^3$ via
	\begin{equation}
		\hat\omega \leftrightarrow \omega ,\qquad 
		\begin{pmatrix} 
			0 & -\omega^3 & \omega^2 \\
			\omega^3 & 0 & -\omega^1 \\
			-\omega^2 & \omega^1 & 0
		\end{pmatrix}
		\leftrightarrow
		\begin{pmatrix}
			\omega^1 \\ \omega^2 \\ \omega^3
		\end{pmatrix}.
	\end{equation}
	With this identification, the Lie bracket is given in terms of the vector cross product
	\begin{equation}
		[(\omega,\mathrm{v}),(\sigma,\mathrm{w})] = (\omega\times\sigma,\omega\times \mathrm{w}-\sigma\times\mathrm{v}).
	\end{equation}

	Let us now give the reduced Lagrangian $\ell$ on $\se(3)$.
	Since the first term in the Lagrangian \eqref{eq:rb_lagrangian} is quadratic in $\dot R$, we have
	\begin{equation}
		\ell(\omega,\mathrm{v}) = \frac{1}{2}\omega\cdot\mathbb{I}\omega + \frac{m}{2}\abs{\mathrm{v}}^2,
	\end{equation}
	where $\mathbb I$ is a symmetric matrix: the \emph{moments of inertia tensor} of the rigid body.
	Identifying the dual $\se(3)^*$ with $\se(3)$ via the pairing
	\begin{equation}
		\pair{(\pi,p),(\omega,\mathrm v)} = \pi\cdot\omega + p\cdot v
	\end{equation}
	we get
	\begin{equation}
		D\ell(\omega,\mathrm v) = \Big(\mathbb{I}\omega, m\mathrm v\Big).
	\end{equation}
	Furthermore, for the definition \eqref{eq:adstar_def} of the co-adjoint action $\ad^*$ we have
	\begin{align}
		\pair{(\pi,p),[(\omega,\mathrm{v}),(\sigma,\mathrm{w})]} 
		&= \pi\cdot (\omega\times\sigma) + p\cdot \left( \omega\times \mathrm{w}-\sigma\times\mathrm{v}\right) \\
		&= \sigma\cdot (\pi\times\omega) + \mathrm{w}\cdot ( p\times \omega ) - \sigma\cdot(\mathrm{v}\times p ) \\
		&= \pair{\underbrace{(\pi\times\omega+p\times \mathrm{v}, p\times \omega)}_{\ad^*_{(\omega,\mathrm{v})}(\pi,p)},(\sigma,\mathrm{w} )}.
	\end{align}
	In particular,
	\begin{equation}
		\ad^*_{(\omega,v)}D\ell(\omega,v) = \left( \mathbb{I}\omega\times \omega + m\underbrace{\mathrm v \times \mathrm v}_0 ,m\mathrm v\times \omega \right)
	\end{equation}
	By \autoref{thm:euler_poincare_eq} we now get the equations of motion as an Euler--Poincaré equation for the variables $\omega$ and $v$ as
	\begin{equation}\label{eq:rb_equations_lagrangian}
	\left\{
	\begin{aligned}
		&\mathbb{I}\dot\omega - \mathbb{I}\omega\times\omega = 0 \\ 
		&\dot{\mathrm v} - \mathrm v\times\omega = 0.
	\end{aligned}
	\right.
	\end{equation}
	The variable $\omega$ is the \emph{body angular velocity}, i.e., the angular velocity expressed in the coordinate frame of the moving body.
	The variable $v$ is the \emph{body center of mass velocity}, i.e., the velocity vector for the center of mass expressed in the coordinate frame of the moving body.
	Notice that the equation for $\omega$ is independent of $v$ reflecting the well-known fact that the angular acceleration of a free rigid body is independent of the velocity of its center of mass.
	The reconstruction equation~\eqref{eq:reconstruction}, recovering $(R(t),\mathrm r(t))$ from a solution $(\omega(t),\mathrm v(t))$ of \eqref{eq:rb_equations_lagrangian}, becomes
	\begin{align}
		\dot R(t) &= R(t)\omega(t) \\
		\dot{\mathrm r}(t) &= R(t)\mathrm v(t) .
	\end{align}
\end{example}

\section{Geodesic Interpretation of Fluid Motion}
\label{sec:geodesic_interpretation}

Vladimir Arnold, who had carefully studied the work of both Euler and Poincaré, realized that Poincaré's framework for mechanics on Lie groups also makes sense (at least formally) in infinite dimensions: he realized that the group of diffeomorphisms on a manifold $M$, here denoted $\Diff(M)$, can be viewed as an infinite dimensional Lie group with composition and inversion as group operations.
If we take a path of diffeomorphisms $t\mapsto \varphi(t,\cdot) \in \Diff(M)$ and differentiate it with respect to $t$ we obtain, for each $x\in M$, an element in $T_{\varphi(x)}M$.
Thus, the tangent space $T_\varphi\Diff(M)$ at $\varphi \in \Diff(M)$ consists of functions $\dot\varphi\colon M\to TM$ such that\todo{Add illustration here.}
\begin{equation}\label{eq:diff_tangent_space}
	\dot\varphi(x) \in T_{\varphi(x)}M.
\end{equation}
The Lie algebra of $\Diff(M)$ is, as usual, the tangent space at the identity diffeomorphism $\id(x) = x$ (which is the unit element of the group).
Thus, from \eqref{eq:diff_tangent_space} it follows that $T_\id\Diff(M)$ are (smooth) sections of the tangent bundle $TM$, i.e., vector fields.
We often use the notation $\Xcal(M)= T_\id\Diff(M)$. 
Notice that if $M$ has a boundary, it is required that vector fields in $T_\id\Diff(M)$ are tangential to it (otherwise it does not generate a diffeomorphism).

As in the finite dimensional case discussed in \autoref{sec:mechanics_on_lie_groups}, let us now derive the lifted action of $\eta \in \Diff(M)$ on $(\varphi,\dot\varphi)\in T\Diff(M)$.
In the finite dimensional case, we were mainly interested in left actions.
Here we focus instead of right actions.
One reason, as we shall see, is that the Lagrangian on $T\Diff(M)$ whose corresponding Euler--Poincaré equation is the incompressible Euler equation~\eqref{eq:euler_incompressible_invariant} is right invariant.
Another reason, which is perhaps more important, is that when one works with Sobolev completions of $\Diff(M)$, only right multiplication is smooth (left multiplication is continuous but not Lipschitz).

Let $\varphi = \varphi(t)$ be a path on $\Diff(M)$ and let $\eta\in\Diff(M)$. 
Then, by definition, the right lifted action of $\Diff(M)$ on $(\varphi,\dot\varphi)\in T\Diff(M)$ is
\begin{equation}
	\frac{\ud}{\ud t}\varphi\circ\eta = \dot\varphi\circ\eta .
\end{equation}
In other words, the tangent derivative of the right multiplication operator $\righttrans_\eta\colon \Diff(M)\to\Diff(M)$ is given by
\begin{equation}
	T\righttrans_\eta(\varphi,\dot\varphi) = (\varphi\circ\eta,\dot\varphi\circ\eta).
\end{equation}
Similarly, for the left lifted action we have by the chain rule
\begin{equation}
	\frac{\ud}{\ud t}\eta\circ\varphi = D\eta\circ\varphi\, \dot\varphi 
\end{equation}
which implies
\begin{equation}
	T\lefttrans_\eta(\varphi,\dot\varphi) = (\eta\circ\varphi,D\eta\circ\varphi\,\dot\varphi).
\end{equation}
This allows us to defined the adjoint action of $\eta\in\Diff(M)$ on the algebra element $v\in\Xcal(M)$
\begin{equation}
	\Ad_\eta v = T_{\id}(\lefttrans_\eta\circ \righttrans_{\eta^{-1}})v = (T\lefttrans_{\eta}\circ T\righttrans_{\eta^{-1}})(\id,v)
	= D\eta (v\circ\eta^{-1}).
\end{equation}
To obtain the infinitesimal adjoint action, i.e., the Lie bracket on $\Xcal(M)$, let $\eta(t)$ be a path in $\Diff(M)$ through the identity such that $\dot\eta(0) = u \in \Xcal(M)$.
Then
\begin{equation}
	\ad_u(v) = \frac{\ud}{\ud t}\Bigg|_{t=0} \Ad_{\eta(t)}v = \frac{\ud}{\ud t}\Bigg|_{t=0} D\eta(t)(v\circ \eta(t)^{-1})
	= (Du)v + (Dv)\frac{\ud}{\ud t}\Bigg|_{t=0}\eta(t)^{-1}
\end{equation}
Just as in the finite dimensional case, we can obtain the time derivative of $\eta(t)^{-1}$ by
\begin{equation}\label{eq:diffinv_derivative}
	0 = \frac{\ud}{\ud t}\left(\eta(t)^{-1}\circ\eta(t)\right) = \left(\frac{\ud}{\ud t}\eta(t)^{-1}\right)\circ\eta(t) + \left(D\eta(t)^{-1}\circ\eta(t)\right)\dot\eta(t).
\end{equation}
Taking $t=0$ we thereby obtain
\begin{equation}
	\frac{\ud}{\ud t}\Bigg|_{t=0}\eta(t)^{-1} = -u.
\end{equation}
Thus,
\begin{equation}\label{eq:bracket_vf}
	\ad_u(v) = (Du)v - (Dv)u \eqqcolon [u,v]
\end{equation}
which is minus the standard commutator of vector fields.

We are now going to describe how the incompressible Euler equation arise as Euler--Poincaré equations.
To this end, we first need to identify a suitable infinite dimensional Lie group.
Recall from \autoref{sec:incompressible_euler} that the time dependent vector field $v$ describing the motion of the fluid should be divergence free.
This means that $\Diff(M)$ is too large, since its algebra contains all (tangential) vector field.
Instead, let $M$ be Riemannian and consider the subgroup of \emph{volume preserving diffeomorphisms}
\begin{equation}
	\SDiff(M) = \{\varphi\in\Diff(M)\mid \varphi_*\vol = \vol \}
\end{equation}
where $\vol$ is the Riemannian volume form.
Equivalently, a diffeomorphism $\varphi$ is volume preserving if $\det(D\varphi) \equiv 1$.
It is clear that $\SDiff(M)$ is a subgroup, because $\varphi_*\vol = \eta_*\vol = \vol$ implies that
\begin{equation}
	(\varphi\circ\eta)_*\vol = \varphi_*(\eta_*\vol) = \varphi_*\vol = \vol.
\end{equation}

The next step is to compute the subalgebra of $\Xcal(M)$ corresponding to $\SDiff(M)$.
If $\varphi(t)$ is a path in $\SDiff(M)$ through the identity with $\dot\varphi(0) = v$, then
\begin{equation}
	\frac{\ud}{\ud t}\Bigg|_{t=0}\varphi(t)_*\vol = 0 \iff \LieD_v \vol = 0 \iff \divv v = 0.
\end{equation}
Thus, the algebra of $\SDiff(M)$ consists of the divergence free vector fields
\begin{equation}
	T_{\id}\SDiff(M) = \Xcalvol(M) \coloneqq  \{ v\in \Xcal(M)\mid \divv v = 0 \}.
\end{equation}

Coming back to Euler, the intuition behind $\SDiff(M)$ as a configuration space is that the motion of the fluid particles is described by a path $t\mapsto \varphi(\cdot,t)$ of volume preserving diffeomorphisms: the position at time $t$ of the fluid particle initially at the point $x\in M$ is then $\varphi(x,t)$.
Thus, the kinetic energy is the integral over $M$ of all infinitesimal fluid particles: this gives us the Lagrangian
\begin{equation}\label{eq:lagrangian_non_invariantL2}
	L(\varphi,\dot\varphi) = \frac{1}{2}\int_M \abs{\dot\varphi(x)}^2\vol 	
\end{equation} 
where we have assumed that the mass density is $1$.
Notice that $L$ is quadratic, non-degenerate positive form in the variable $\dot\varphi$.
Hence, it defines an $L^2$-type Riemannian structure on $\SDiff(M)$ (or more generally on $\Diff(M)$):
\begin{equation}\label{eq:semi_invariant_L2_metric}
	\pair{U,V}_{L^2} = \int_M U(x)\cdot V(x)\,\vol ,\qquad U,V\in T_\varphi\SDiff(M).
\end{equation}

\begin{lemma}\label{lem:right_invariant_L2}
	The Riemannian metric \eqref{eq:semi_invariant_L2_metric} on $\SDiff(M)$ (and hence also the Lagrangian function \eqref{eq:lagrangian_non_invariantL2} on $T\SDiff(M)$) is right invariant.
	That is, for any $\eta\in\SDiff(M)$ and any $U,V\in T_\varphi\SDiff(M)$ we have
	\begin{equation}
		\pair{U\circ\eta,V\circ\eta}_{L^2} = \pair{U,V}_{L^2}.
	\end{equation}
\end{lemma}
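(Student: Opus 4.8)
The plan is to reduce the statement to the change-of-variables formula together with the defining property $\eta_*\vol = \vol$ of a volume-preserving diffeomorphism. The structural observation that makes everything work—and the reason the argument succeeds for right translation while failing for left translation—is that the right lifted action is \emph{pure precomposition}: by the formula $T\righttrans_\eta(\varphi,\dot\varphi) = (\varphi\circ\eta,\dot\varphi\circ\eta)$ established above, no factor of $D\eta$ enters. Consequently the integrand transforms as the pullback of a scalar function, rather than acquiring a Jacobian from the vectors themselves.

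Concretely, I would first record the pointwise identity. For each $x\in M$ the vectors $(U\circ\eta)(x) = U(\eta(x))$ and $(V\circ\eta)(x) = V(\eta(x))$ both lie in $T_{\varphi(\eta(x))}M$, so their inner product is taken in exactly the tangent space where $U(\eta(x))$ and $V(\eta(x))$ live. Writing $g\colon M\to\RR$ for the scalar function $g(y) = U(y)\cdot V(y)$, this gives
\begin{equation}
	(U\circ\eta)(x)\cdot (V\circ\eta)(x) = g(\eta(x)) = (g\circ\eta)(x).
\end{equation}
Next I would apply the change of variables. Since $\eta$ is volume preserving, it preserves orientation and satisfies $\eta^*\vol = \vol$ (equivalently $\det(D\eta)\equiv 1$; note $\eta^{-1}\in\SDiff(M)$ as well, so $\eta^*\vol = (\eta^{-1})_*\vol = \vol$). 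Hence $\eta^*(g\,\vol) = (g\circ\eta)\,\eta^*\vol = (g\circ\eta)\,\vol$, and using invariance of the integral of a top-degree form under the orientation-preserving diffeomorphism $\eta$,
\begin{equation}
	\pair{U\circ\eta,V\circ\eta}_{L^2} = \int_M (g\circ\eta)\,\vol = \int_M \eta^*(g\,\vol) = \int_M g\,\vol = \pair{U,V}_{L^2},
\end{equation}
which is the claim.

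The step I expect to require the most care is not any computation but the bookkeeping in the pointwise identity: verifying that right translation introduces no Jacobian factor into the integrand. This is precisely where incompressibility is used, and it is illuminating to contrast it with left translation, where $T\lefttrans_\eta(\varphi,\dot\varphi) = (\eta\circ\varphi, D\eta\circ\varphi\,\dot\varphi)$ brings in $D\eta$ and thereby breaks invariance of the $L^2$ pairing. Everything else is the standard fact that $\int_M \eta^*\omega = \int_M \omega$ for a top form $\omega$ under an orientation-preserving diffeomorphism, combined with the single input $\eta^*\vol = \vol$.
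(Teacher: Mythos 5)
Your proof is correct and takes essentially the same route as the paper's: both reduce the claim to the pointwise identity $(U\circ\eta)\cdot(V\circ\eta) = (U\cdot V)\circ\eta$ and then apply the change-of-variables formula using $\eta^*\vol = \vol$. Your additional observations---that the right lifted action is pure precomposition with no $D\eta$ factor, and that $\eta^{-1}\in\SDiff(M)$ justifies passing from $\eta_*\vol=\vol$ to $\eta^*\vol=\vol$---are accurate refinements of the same argument rather than a different approach.
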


\begin{proof}
	We have
	\begin{equation}
		\pair{U\circ\eta,V\circ\eta}_{L^2} = \int_M (U\cdot V)\circ\eta \,\vol .
	\end{equation}
	By $\eta^*\vol = \vol$ and the integral change of variables formula we also have, for any function $f$ on $M$, that
	\begin{equation}
		\int_M f\circ\eta \,\vol = \int_M f\circ\eta \,\eta^*\vol = \int_{\eta(M)}f\,\vol = \int_{M}f\,\vol.
	\end{equation}
	The result now follows by taking $f = U\cdot V$.
\end{proof}

\begin{remark}
	Notice that the Riemannian metric \eqref{eq:semi_invariant_L2_metric} extended to all of $\Diff(M)$ is \emph{not} right invariant with respect to $\Diff(M)$, but only with respect to the subgroup $\SDiff(M)$.
	This is important, because it implies that for compressible Euler equations, where the configuration space is $\Diff(M)$ with kinetic energy still corresponding to the $L^2$ metric \eqref{eq:semi_invariant_L2_metric}, it is not possible to reduce the equation to the Lie algebra $\Xcal(M)$; in addition to the fluid vector field one needs the time dependent mass density as a state variable.\todo{Add reference to coming section on compressible fluids.}
\end{remark}

Recall that a curve $[0,1]\ni t \to \varphi(t) \in \SDiff(M)$ is a geodesic with respect to the Riemannian metric~ if it extremizes the length functional
\begin{equation}
	\mathrm{len}(\varphi) = \int_0^1 \sqrt{\pair{\dot\varphi(t),\dot\varphi(t)}_{L^2}}\,\ud t.
\end{equation}
The length functional is independent of the parameterization of $\varphi$.
Thus, we may choose the constant speed parameterization $\pair{\dot\varphi,\dot\varphi}_{L^2} = \text{const}$.

Our objective is to prove the following remarkable result, which can be considered the foundation of the field of geometric hydrodynamics.

\begin{theorem}[\citet{Ar1966}]\label{thm:arnold1966}
	Let $\varphi\colon[0,1]\to \SDiff(M)$ be a constant speed geodesic with respect to the Riemannian metric \eqref{eq:semi_invariant_L2_metric}.
	Then the time dependent vector field
	\begin{equation}
		v(t) = \dot\varphi(t)\circ\varphi(t)^{-1}
	\end{equation}
	fulfills the incompressible Euler equation
	\begin{equation}
		\dot v + \nabla_v v = -\nabla p, \qquad \divv v = 0.
	\end{equation}
\end{theorem}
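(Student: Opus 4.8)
The plan is to recognize the statement as the infinite-dimensional, \emph{right}-invariant incarnation of the Euler--Poincaré reduction from \autoref{thm:euler_poincare_eq}, applied to the group $\SDiff(M)$, its algebra $\Xcalvol(M)$, and the reduced Lagrangian $\ell(v) = \frac12\int_M\abs{v}^2\,\vol$. First I would note that a constant-speed geodesic also extremizes the energy $\frac12\int_0^1\pair{\dot\varphi,\dot\varphi}_{L^2}\,\ud t$, whose integrand is the right-invariant Lagrangian \eqref{eq:lagrangian_non_invariantL2}; by \autoref{lem:right_invariant_L2} it descends to $\ell$ along the right trivialization $v = \dot\varphi\circ\varphi^{-1}$.

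Next I would repeat the variational argument in the proof of \autoref{thm:euler_poincare_eq}, but in right-trivialized coordinates. The only change is a sign: for a variation $\varphi_\epsilon$, setting $v = \partial_t\varphi\circ\varphi^{-1}$ and $w = \partial_\epsilon\varphi\circ\varphi^{-1}$, the right-invariant analog of \autoref{lem:variation_poincare} reads $\partial_\epsilon v - \partial_t w = -[v,w]$ with the bracket \eqref{eq:bracket_vf}. Inserting this into $0 = \frac{\ud}{\ud\epsilon}\big|_{\epsilon=0}\int_0^1\ell(v_\epsilon)\,\ud t$, integrating by parts in $t$ with $w(0)=w(1)=0$, and using that $w$ ranges over all of $\Xcalvol(M)$, yields the right Euler--Poincaré equation $\frac{\ud}{\ud t}D\ell(v) + \ad^*_v D\ell(v) = 0$.

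Then comes the computational core: evaluating both terms under the $L^2$ pairing. Since $\ell$ is the $L^2$ quadratic form, $D\ell(v) = v$. To compute $\ad^*_v v$ I would pair against an arbitrary $w\in\Xcalvol(M)$ and rewrite the bracket via torsion-freeness of the Levi--Civita connection as $\ad_v w = [v,w] = \nabla_w v - \nabla_v w$, so that $\pair{\ad^*_v v,w}_{L^2} = \pair{v,\nabla_w v}_{L^2} - \pair{v,\nabla_v w}_{L^2}$. The first term vanishes: by \eqref{eq:covar_identity} it equals $\tfrac12\int_M w\cdot\nabla\abs{v}^2\,\vol$, which is zero by the orthogonality in \autoref{lem:helmholtz} because $w$ is divergence free. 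The second term I would integrate by parts using metric compatibility, $\divv v = 0$, and $v\cdot\mathbf{n}|_{\partial M}=0$, giving $-\pair{v,\nabla_v w}_{L^2} = \pair{\nabla_v v, w}_{L^2}$. Hence $\ad^*_v v = P(\nabla_v v)$, the $L^2$-orthogonal projection of $\nabla_v v$ onto $\Xcalvol(M)$.

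Finally I would assemble everything: the reduced equation becomes $\dot v + P(\nabla_v v) = 0$, and the Helmholtz decomposition of \autoref{lem:helmholtz}, written as $\nabla_v v = P(\nabla_v v) + \nabla p$, converts this into $\dot v + \nabla_v v = -\nabla p$ (absorbing the sign into $p$), while $\divv v = 0$ holds by construction since $v\in\Xcalvol(M)$. I expect the genuine obstacle to lie not in this formal calculus but in its analytic scaffolding: realizing $\SDiff(M)$ as a smooth Hilbert (Sobolev $H^s$) manifold and group on which the variational principle is legitimate, and handling the dual $\Xcalvol(M)^*$ honestly---$\ad^*_v v$ a priori leaves the divergence-free fields and only the Helmholtz projection returns it, which is precisely the mechanism that manufactures the pressure $p$.
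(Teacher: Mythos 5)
Your proposal is correct and follows essentially the same route as the paper: the same right-invariant variation formula (the paper's \autoref{lem:variation_arnold}, with your sign convention matching \eqref{eq:bracket_vf}), the same use of torsion-freeness, the identity \eqref{eq:covar_identity}, metric compatibility, and the Helmholtz orthogonality and decomposition of \autoref{lem:helmholtz} to manufacture the pressure. The only cosmetic difference is packaging: you factor the argument through the abstract right Euler--Poincar\'e equation $\tfrac{\ud}{\ud t}D\ell(v) + \ad^*_v D\ell(v) = 0$ and compute $\ad^*_v v$ as the Leray--Helmholtz projection of $\nabla_v v$, whereas the paper inlines exactly those manipulations in a single variational computation before concluding that $\dot v + \nabla_v v$ is $L^2$-orthogonal to $\Xcalvol(M)$.
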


Before we go on to the proof, let us give some examples on how the result can be used for insights on fluid motion:
\begin{enumerate}
	\item The stability of the fluid motion is related to the sectional curvature of the Riemannian metric \eqref{eq:semi_invariant_L2_metric} on $\SDiff(M)$: in regions of positive curvature leads to converging motion of the fluid particles (laminar type flows), whereas regions with negative curvature leads to diverging fluid motion (turbulent flows).
	Arnold showed that the sectional curvature is negative in almost all directions, so that nearby fluid regions typically diverge exponentially fast.
	This led him to the result that reliable long-term weather forecast are practically impossible.

	\item Arnold's framework for fluids can be used to give rigorous local well-posedness results for the Euler equations.
	This was done in 1970 by \citet{EbMa1970}, who showed that, when formulated using Lagrangian coordinates on $T\SDiff(M)$, the resulting dynamical has no loss of derivatives, so it can be extended to a smooth ordinary differential equation on a Banach manifold of Sobolev type.
	Local existence and uniqueness, as well as smooth dependence on initial conditions, then follows from standard ODE techniques (Picard--Lindelöf iterations).

	\item In addition to the incompressible Euler equation, many PDE of mathematical physics has been realized to fit the framework of Arnold, but for different infinite dimensional groups, and different choices of Riemannian metrics.
	For example, the KdV, Camassa--Holm, Landau--Lifschitz, and magneto hydrodynamical equations are all examples.
	Today, such equations are called \emph{Euler--Arnold equations}.
\end{enumerate}

In order to prove \autoref{thm:arnold1966}, let us first recollect the close relation between Lagrangian mechanics and Riemannian geometry:
\begin{framed}
	\centering
	$\varphi$ extremizes the action functional for the kinetic energy Lagrangian \eqref{eq:lagrangian_non_invariantL2}.
	\\[1ex]
	$\Updownarrow$
	\\[1ex]
	$\varphi$ is a constant speed geodesic curve for the Riemannian metric \eqref{eq:semi_invariant_L2_metric}.
\end{framed}
Thus, analogous to the finite dimensional setting of \autoref{sec:mechanics_on_lie_groups}, we shall study the Euler--Lagrange equations for the Lagrangian \eqref{eq:lagrangian_non_invariantL2} on $T\SDiff(M)$.
First we need a result analogous to \autoref{lem:variation_poincare}.

\begin{lemma}\label{lem:variation_arnold}
	Let $\varphi\colon [0,1]\times [-\delta,\delta]\to \SDiff(M)$ and consider the right reduced tangent paths on $\Xcalvol(M)$ given by
	\begin{equation}
		v(t,\epsilon) \coloneqq \frac{\pd \varphi(t,\epsilon)}{\pd t}\circ\varphi(t,\epsilon)^{-1} 
	\end{equation}
	and
	\begin{equation}
		u(t,\epsilon) \coloneqq \frac{\pd \varphi(t,\epsilon)}{\pd \epsilon}\circ\varphi(t,\epsilon)^{-1}.
	\end{equation}
	Then
	\begin{equation}
		\frac{\pd v}{\pd \epsilon} - \frac{\pd u}{\pd t} = -[v,u].
	\end{equation}
\end{lemma}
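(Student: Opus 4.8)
The plan is to transcribe the matrix-group argument of \autoref{lem:variation_poincare} into the diffeomorphism setting, replacing matrix multiplication by composition and the matrix inverse by the diffeomorphism inverse. The natural starting point is the pair of reconstruction relations that come straight from the definitions of $v$ and $u$, namely
\begin{equation}
	\frac{\pd\varphi}{\pd t} = v\circ\varphi \qquad\text{and}\qquad \frac{\pd\varphi}{\pd\epsilon} = u\circ\varphi,
\end{equation}
valid at each fixed $(t,\epsilon)$. The strategy is then to differentiate the first relation in $\epsilon$ and the second in $t$, to use that the mixed second partials of $\varphi$ coincide, and finally to strip off the common outer factor by precomposing with $\varphi^{-1}$.

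The one step that genuinely needs care is the chain rule for $\pd_\epsilon(v\circ\varphi)$, where $v = v(t,\epsilon,\cdot)$ carries an explicit $\epsilon$-dependence \emph{and} is composed with the $\epsilon$-dependent map $\varphi$. Keeping these two sources apart—writing $\pd_\epsilon v$ for the derivative in the explicit parameter (a vector field) and $Dv$ for the spatial Jacobian—I would compute
\begin{equation}
	\frac{\pd}{\pd\epsilon}(v\circ\varphi) = \Big(\frac{\pd v}{\pd\epsilon}\Big)\circ\varphi + \big((Dv)\circ\varphi\big)\frac{\pd\varphi}{\pd\epsilon} = \Big(\frac{\pd v}{\pd\epsilon} + (Dv)u\Big)\circ\varphi,
\end{equation}
using $\pd_\epsilon\varphi = u\circ\varphi$ in the last step, and symmetrically
\begin{equation}
	\frac{\pd}{\pd t}(u\circ\varphi) = \Big(\frac{\pd u}{\pd t} + (Du)v\Big)\circ\varphi.
\end{equation}

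Since $\varphi\in C^2$, the mixed derivatives agree, $\pd_\epsilon\pd_t\varphi = \pd_t\pd_\epsilon\varphi$, so the two right-hand sides above are equal. Because $\varphi$ is invertible I may precompose both with $\varphi^{-1}$, which cancels the outer $\circ\varphi$, and after rearranging and comparing with the bracket \eqref{eq:bracket_vf} this yields exactly the claim,
\begin{equation}
	\frac{\pd v}{\pd\epsilon} - \frac{\pd u}{\pd t} = (Du)v - (Dv)u = -[v,u].
\end{equation}

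I expect the only real obstacle to be bookkeeping rather than substance: the commutator is born precisely from the cross terms $(Dv)u$ and $(Du)v$ in the chain rule, mirroring how the difference of products produced the matrix commutator in \autoref{lem:variation_poincare}, and one must resist conflating the explicit parameter derivatives with the spatial Jacobians. The substantive hypothesis being invoked is the $C^2$ regularity of $\varphi$, needed to commute the mixed partials; the smoothness of right composition and the invertibility of $\varphi$ are already guaranteed by working on $\SDiff(M)$, and indeed the same computation goes through verbatim on all of $\Diff(M)$.
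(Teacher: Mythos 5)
Your proof is correct, and it is organized genuinely differently from the paper's. The paper differentiates the reduced velocities themselves: starting from $v=\frac{\pd\varphi}{\pd t}\circ\varphi^{-1}$ and $u=\frac{\pd\varphi}{\pd\epsilon}\circ\varphi^{-1}$, it computes $\frac{\pd v}{\pd\epsilon}$ and $\frac{\pd u}{\pd t}$ head-on, which forces it to differentiate the inverse map via \eqref{eq:diffinv_derivative}, giving $\frac{\pd\varphi^{-1}}{\pd t}=-D\varphi^{-1}\,v$ and its $\epsilon$-analogue, and then to recombine the resulting $D\varphi^{-1}$ factors through the chain-rule identity $\bigl(D\frac{\pd\varphi}{\pd t}\circ\varphi^{-1}\bigr)D\varphi^{-1}=Dv$. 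You instead differentiate the reconstruction relations $\frac{\pd\varphi}{\pd t}=v\circ\varphi$ and $\frac{\pd\varphi}{\pd\epsilon}=u\circ\varphi$, equate the mixed partials $\pd_\epsilon\pd_t\varphi=\pd_t\pd_\epsilon\varphi$, and only at the end strip the common outer factor by precomposing with $\varphi^{-1}$. The mathematical content is the same in both cases---equality of mixed second derivatives plus the chain rule---but your arrangement buys freedom from the derivative-of-the-inverse computation and its $D\varphi^{-1}$ bookkeeping, at the small price of the mixed chain rule $\pd_\epsilon(v\circ\varphi)=(\pd_\epsilon v)\circ\varphi+\bigl((Dv)\circ\varphi\bigr)\pd_\epsilon\varphi$, which you isolate and handle correctly; this is indeed the one step where conflating the parameter derivative with the spatial Jacobian would wreck the argument. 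Your sign conventions match \eqref{eq:bracket_vf}, since $[u,v]=(Du)v-(Dv)u$ gives $(Du)v-(Dv)u=-[v,u]$ as claimed, and your closing observations---that $C^2$ regularity in $(t,\epsilon)$ is the substantive hypothesis, and that volume preservation plays no role so the identity holds verbatim on $\Diff(M)$---are both accurate and consistent with how the paper uses the lemma.
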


\begin{proof}
	Direct calculations give
	\begin{equation}
		\frac{\pd v}{\pd \epsilon} = \frac{\pd^2 \varphi}{\pd t\pd \epsilon}\circ\varphi^{-1} + \left(D\frac{\pd \varphi}{\pd t}\circ\varphi^{-1}\right)\frac{\pd \varphi^{-1}}{\pd\epsilon}
	\end{equation}
	and
	\begin{equation}
		\frac{\pd u}{\pd t} = \frac{\pd^2 \varphi}{\pd t\pd \epsilon}\circ\varphi^{-1} + \left(D\frac{\pd \varphi}{\pd \epsilon}\circ\varphi^{-1}\right)\frac{\pd \varphi^{-1}}{\pd t}
	\end{equation}
	From the calculation \eqref{eq:diffinv_derivative} we see that
	\begin{equation}
		\frac{\pd \varphi^{-1}}{\pd t} = -D\varphi^{-1}\left( \frac{\pd \varphi}{\pd t}\circ\varphi^{-1} \right) = -D\varphi^{-1}v
	\end{equation}
	and
	\begin{equation}
		\frac{\pd \varphi^{-1}}{\pd \epsilon} = -D\varphi^{-1}\left( \frac{\pd \varphi}{\pd \epsilon}\circ\varphi^{-1} \right)=-D\varphi^{-1}u.
	\end{equation}
	This gives
	\begin{equation}
		\frac{\pd v}{\pd \epsilon} - \frac{\pd u}{\pd t} = -\left(D\frac{\pd \varphi}{\pd t}\circ\varphi^{-1}\right)D\varphi^{-1}u + \left(D\frac{\pd \varphi}{\pd \epsilon}\circ\varphi^{-1}\right)D\varphi^{-1}u
	\end{equation}
	From the chain rule we have
	\begin{equation}
		\left(D\frac{\pd \varphi}{\pd t}\circ\varphi^{-1}\right)D\varphi^{-1}u =  D(\underbrace{\frac{\pd \varphi}{\pd t}\circ\varphi^{-1}}_{v}) u
	\end{equation}
	and similarly for the other term.
	Thus,
	\begin{equation}
		\frac{\pd v}{\pd \epsilon} - \frac{\pd u}{\pd t} = -(Dv)u + (Du)v = -[v,u].
	\end{equation}
	This proves the theorem.
\end{proof}

We are now in position to prove Arnold's theorem.

\begin{proof}[Proof of \autoref{thm:arnold1966}]
	Let $\varphi(t)$ be a curve in $\SDiff(M)$ that extremizes the kinetic energy Lagrangian \eqref{eq:lagrangian_non_invariantL2}.
	Thus, if $\varphi_\epsilon$ is a variation of $\varphi(t)$, then, from right invariance it follows that
	\begin{equation}
		\frac{\ud}{\ud\epsilon}\Bigg|_{\epsilon=0}\frac{1}{2}\int_0^1 \int_M \abs{v_\epsilon(x,t)}^2 \vol \ud t = 0
	\end{equation}
	where $v_\epsilon = \dot\varphi\circ\varphi^{-1}$.
	This gives
	\begin{equation}
		\int_0^1 \int_M \pair{v(x,t), \frac{\ud}{\ud\epsilon}\Bigg|_{\epsilon=0} v_\epsilon(x,t)} \vol \ud t = 0.
	\end{equation}
	From \autoref{lem:variation_arnold} we then get
	\begin{equation}
		\int_0^1 \int_M \pair{v, \dot u - [v,u]} \vol \ud t = 0,
	\end{equation}
	where $u\colon[0,1]\to \Xcalvol(M)$ is an arbitrary path which vanishes at the end points.
	Using that the Levi-Civita connection fulfills
	\begin{equation}
		\nabla_v u - \nabla_u v = [u,v]
	\end{equation}
	we get
	\begin{equation}
		0 = \int_0^1 \int_M \pair{v, \dot u + \nabla_v u - \nabla_u v} \vol \ud t = \int_0^1 \int_M \left(\pair{v, \dot u + \nabla_v u} - \frac{1}{2}\pair{\nabla\abs{v}^2,u}\right) \vol \ud t.
	\end{equation}
	Since $u$ is divergence free, it follows from Helmholtz \autoref{lem:helmholtz} that the last term vanishes.
	Integration by parts in time, using that $u$ vanishes at the end points, and the co-variant derivative identity
	\begin{equation}
		\pair{w,\nabla (v\cdot u)} = \pair{\nabla_w v,u} + \pair{v,\nabla_w u}
	\end{equation}
	then leads to
	\begin{equation}
		\int_0^1 \left(\pair{-\dot v - \nabla_v v , u}_{L^2} + \pair{v,\nabla (v\cdot u)}_{L^2}\right) \ud t = 0
	\end{equation}
	Again, using Helmholtz \autoref{lem:helmholtz} we finally obtain
	\begin{equation}
		\int_0^1 \pair{\dot v + \nabla_v v , u}_{L^2} \ud t = 0 .
	\end{equation}
	This condition should be valid for any path $u(t)$ in $\Xcalvol(M)$.
	However, the expression $\nabla_v v$ is in general not divergence free, so we cannot conclude that $\dot v + \nabla_v v = 0$.
	Rather, we can conclude that $\dot v + \nabla_v v$ must be $L^2$ orthogonal in $\Xcal(M)$ to the subspace $\Xcalvol(M)$.
	From the Helmholtz decomposition this implies that
	\begin{equation}
		\dot v + \nabla_v v + \nabla p = 0
	\end{equation}
	for some function $p$ unique up to addition of constants.
\end{proof}

\iftoggle{coimbra2}{
	The aim of the lectures notes is now fulfilled: to present and prove Arnold's \autoref{thm:arnold1966}.
	Let me point out, however, that this is only where the story of geometric hydrodynamics begins.
	Since Arnold's work, other mathematicians and physicists have found many other PDE in mathematical possessing the same structure.
	The active area of research has also shown connections to optimal transport theory, shape analysis, information theory, Kähler geometry, and other fields of mathematics. 
	I refer to the monographs by \citet{ArKh1998} and \citet{KhWe2009} for a deeper study.
	To encourage further reading, I list here some interesting research directions:
	\begin{itemize}
		\item Shallow water equations are essential for our understanding of oceanic and atmospheric flows. 
		It turns out that many of those equations also has can be interpreted as flows on diffeomorphism groups, but now with respect to a Riemannian metric that is not fully right-invariant, and also with an added potential function (typically originating from gravity).
		To study shallow water equations from the point-of-view of Arnold yields new insights into higher-order conservation laws, existence and uniqueness properties, as well as construction of efficient and structure preserving numerical method. \cite{Ha1982,Io2012,BaMo2018,MoVi2018}

		\item Optimal mass transport is an old mathematical problem initially formulated by the French mathematician Gaspard Monge in the 18th century.
		During the last 20 years this field has gone through enormous mathematical developments.\footnote{Modern mathematical work in optimal transport theory has (so far) generated two Fields medals; Cédric Villani (2010) and Alessio Figalli (2018).}\
		Many of these developments have to do with the combination of (infinite-dimensional) geometry and analysis.
		Directly connected to hydrodynamics and Euler--Arnold equations is the work by \citet{BeBr2000} and \citet{Ot2001} who realized that the $L^2$-Wasserstein distance in optimal transport can be interpreted as coming from an infinite-dimensional Riemannian metric on the space of probability densities. In turn, this Riemannian metric is nothing but Arnold's metric on the space of diffeomorphisms, but restricted to so-called \emph{horizontal directions}, as is well-explained in Appendix A.5 of the monograph by \citet{KhWe2009}.
		In addition to geodesic equations one may also study Riemannian gradient flows.
		This line of research was initiated by \citet*{JoKiOt1998}, who realized that the heat equation in physics can be interpreted as the Riemannian gradient flow of the entropy functional, thereby `proving' the second law of thermodynamics that the entropy of an isolated system increases.

		\item Geometric hydrodynamics offers an interesting way to view the Schrödinger equation in quantum mechanics.
		Indeed, by extending Arnold's Riemannian metric to the space of all diffeomorphisms (as in the connection to optimal transport), and then adding a potential function given by the \emph{Fisher information functional} (well known in statistics), one can, via a \emph{Madelung transform} obtain a link between Schrödinger equations and hydrodynamics: the Schrödinger equation becomes a compressible fluid equation with a non-Newtonian potential.
		It turns out that the Madelung transform has many interesting geometric properties.
		In particular, it is a Kähler mapping between the complex projective Hilbert space and the co-tangent bundle of the space of probability densities equipped with the (lifted) Fisher--Rao metric (also well known in statistics). \cite{Ma1927,KhMiMo2018,KhMiMo2019}
	\end{itemize}
}{
\section{Hamiltonian Formulation and Vorticity}

\section{Other Examples}
}





\def\cprime{$'$}

\end{document}